\newtheorem{Proposition}{Proposition}
\newtheorem{Corollary}{Corollary}
\newtheorem{Definition}{Definition}
\newtheorem{Example}{Example}
\newtheorem{proof}{Proof}
\title{Petri Net Modeling for Ising Model Formulation in Quantum Annealing}
\author{ \href{https://orcid.org/0000-0003-1583-2523}{\includegraphics[scale=0.06]{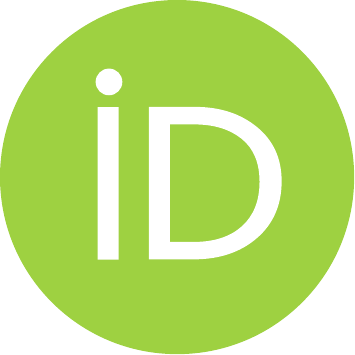}\hspace{1mm}Morikazu Nakamura}\\
	Computer Science and Intelligent Systems\\
	Faculty of Engineering\\
	University of the Ryukyus\\
	Okinawa 903-0213, Japan\\
	\texttt{morikazu@ie.u-ryukyu.ac.jp} \\
	\And
	Kohei Kaneshima \\
	Information Engineering Course\\
	Graduate School of Engineering and Science\\
	University of the Ryukyus\\
	Okinawa 903-0213, Japan\\
	\texttt{k208580@ie.u-ryukyu.ac.jp} \\
	\And
	Takeo Yoshida \\
	Computer Science and Intelligent Systems\\
	Faculty of Engineering\\
	University of the Ryukyus\\
	Okinawa 903-0213, Japan\\
	\texttt{tyoshida@ie.u-ryukyu.ac.jp} \\
}
\begin{document}
\maketitle

\begin{abstract}
Quantum annealing is an emerging new platform for combinatorial optimization, requiring an Ising model formulation for optimization problems.
The formulation can be an essential obstacle to the permeation of this innovation into broad areas of everyday life.
Our research is aimed at the proposal of a Petri net modeling approach for an Ising model formulation.
Although the proposed method requires users to model their optimization problems with Petri nets, this process can be carried out in a relatively straightforward manner if we know the target problem and the simple Petri net modeling rules.
With our method, the constraints and objective functions in the target optimization problems are represented as fundamental characteristics of Petri net models, extracted systematically from Petri net models, and then converted into binary quadratic nets, equivalent to Ising models.
The proposed method can drastically reduce the difficulty of the Ising model formulation.
\end{abstract}

\keywords{Quantum annealing; Ising model; Quadratic unconstraint binary optimization; Petri nets; Binary quadratic net, Model-based engineering, Model-based optimization}

\section{Introduction}
Quantum annealing is a metaheuristic for combinatorial optimization in quantum mechanics research\citep{PhysRevE.58.5355},\citep{qa_science}.
This new approach solves unconstrained optimization problems formulated as a Hamiltonian by evolving the time-dependent Schr\"{o}dinger equation from a quantum mechanical superposition of all possible states to its ground state in physical systems\citep{AIP_Nishimori}.
A quantum annealing machine is a special-purpose quantum computer that solves combinatorial optimization problems.
D-wave is the first commercial quantum annealing machine\citep{dwave}.

Combinatorial optimization contributes to a more effective and reasonable life by minimizing the total costs or maximizing the benefits under certain constraints.
Although sophisticated solvers can efficiently solve small problems, many practical problems are computationally intractable. 
Such problems are formally characterized as NP-hard; that is, there are no polynomial-time algorithms to solve the problems. 
Many researchers in computer science and operations research believe that there are no polynomial-time algorithms for NP-hard problems and continue to develop efficient heuristic algorithms to solve larger problems\citep{10.5555/574848}.

Quantum annealing is a metaheuristic that does not restrict the target problems.  
Thus far, some combinatorial optimization problems, but not many, have been formulated as Ising or Quadratic Unconstrained Binary Optimization (QUBO) models and solved using annealing machines\citep{10.3389/fphy.2014.00005}, \citep{osti_1756458}, \citep{8643733}, \citep{NSP}.
Theoretical and experimental studies have contributed to improving the performance of annealing processes \citep{PhysRevApplied.14.014100}, \citep{PhysRevLett.123.120501}, \citep{PhysRevLett.126.070505}, \citep{Hauke_2020}.
Useful software tools have been developed to utilize quantum annealing machines \citep{pyqubo}.
Digital implementations of the quantum annealing process have also emerged because this new optimization technique has the potential to outperform traditional meta-heuristic algorithms.
A graphics processing unit (GPU) based quantum annealing simulator has shown high-performance optimization by utilizing spatial and temporal parallelism during the annealing process \citep{9057502}.
Therefore, we have high hopes for this new innovative optimizer. 

However, obstacles to quantum annealing usability include the hardness of the parameter tuning and formulation of the target problems as Ising or QUBO models.
The constraints of the original problem and objective functions are represented as a penalty function, minimized as an unconstrained optimization problem.
Search processes can reach infeasible solutions, and
to avoid such situations, it is necessary to carefully tune the parameters in pre-performing annealing process.
The difficulty of the Ising model formulation is also an essential obstacle to the usability of quantum annealing.
The new annealer requires users to create quadratic penalty functions, Ising, or QUBO models, for target optimization problems.

This study focuses solely on the latter problem, which is the difficulty of formulating the Ising and QUBO models.
To overcome this problem, we propose a method for generating Ising or QUBO models from Petri net models representing the target optimization problems.
Our proposal is a Petri net modeling approach in which, once we model the optimization problem with Petri nets, we can systematically obtain the Ising or QUBO formulation.

A Petri net is a mathematical and graphical modeling language for various systems, such as computer networks, manufacturing systems, transportation networks, biological systems, agricultural production processes, and other applications\citep{24143}.
We only need the domain knowledge of the target systems for modeling because the rules and components of Petri nets are simple and intuitive.
Petri net modeling is quite efficient for representing integer linear programming problems because Petri nets have mathematical forms, denoted as linear expressions\citep{refId0}.
The authors proposed an algorithm for generating mixed-integer linear programming (MILP) based on Petri net models for practical scheduling and optimal resource assignment problems\citep{MorikazuNAKAMURA2019}. 
Our algorithm realizes the automatic formulation of MILPs for given combinatorial optimization problems. 

This study extends our MILP version to quantum annealing models, where quadratic functions of binary variables need to be formulated.
As far as we know, this is the first study to apply Petri net modeling to the quantum annealing.
In this paper, we introduce a new notation called \textit{binary quadratic nets}, which denote Ising or QUBO models, respectively.
Our method incrementally targets binary quadratic nets from problem-domain Petri net models that represent target optimization problems.

This paper is organized as follows: Section 2 summarizes the basic knowledge on quantum annealing and Petri nets. 
Section 3 introduces a new class of Petri nets, binary quadratic nets, for representing Ising or QUBO models.
Section 4 proposes a method for formulating Ising or QUBO models from problem-domain Petri nets and presents some examples.
Finally, in Section 5, we present some concluding remarks and areas of future tasks.

\section{Preliminaries}
This section provides the basic definitions and notations of Ising models and Petri nets.

\subsection{Ising and QUBO Models}
Quantum annealing is a metaheuristic for solving combinatorial optimization problems, where we try to find a value $+1$ or $-1$ for each of the Ising variables $\boldsymbol s = (s_1, s_2, ..., s_N)$ such that the following Hamiltonian of the Ising models gives the lowest energy state:
\begin{equation}
H(\boldsymbol s) =  \sum_{i=1}^{N}h_is_i + \sum_{i<j} J_{i,j}s_is_j, \label{eqn:ising_model}
\end{equation}
where $h_i$ is the magnetic field coefficient at site $s_i$, and $J_{i, j}$ is the interaction coefficient between $s_i$ and $s_j$.
Ising variables correspond to discrete variables in metaheuristics.
The lowest energy state, called the {\it ground state} of $H$, provides the optimal solution.
Figure \ref{fig:Ising_bqn_model} depicts a two-dimensional array Ising model.
A circle represents a spin and is connected to four neighbor spins.
The arrows $\uparrow$ and $\downarrow$ represent the \textit{up} spin (+1) and \textit{down} spin (-1), respectively.
\begin{figure}[h]
 \begin{center}
 \includegraphics[scale=0.5]{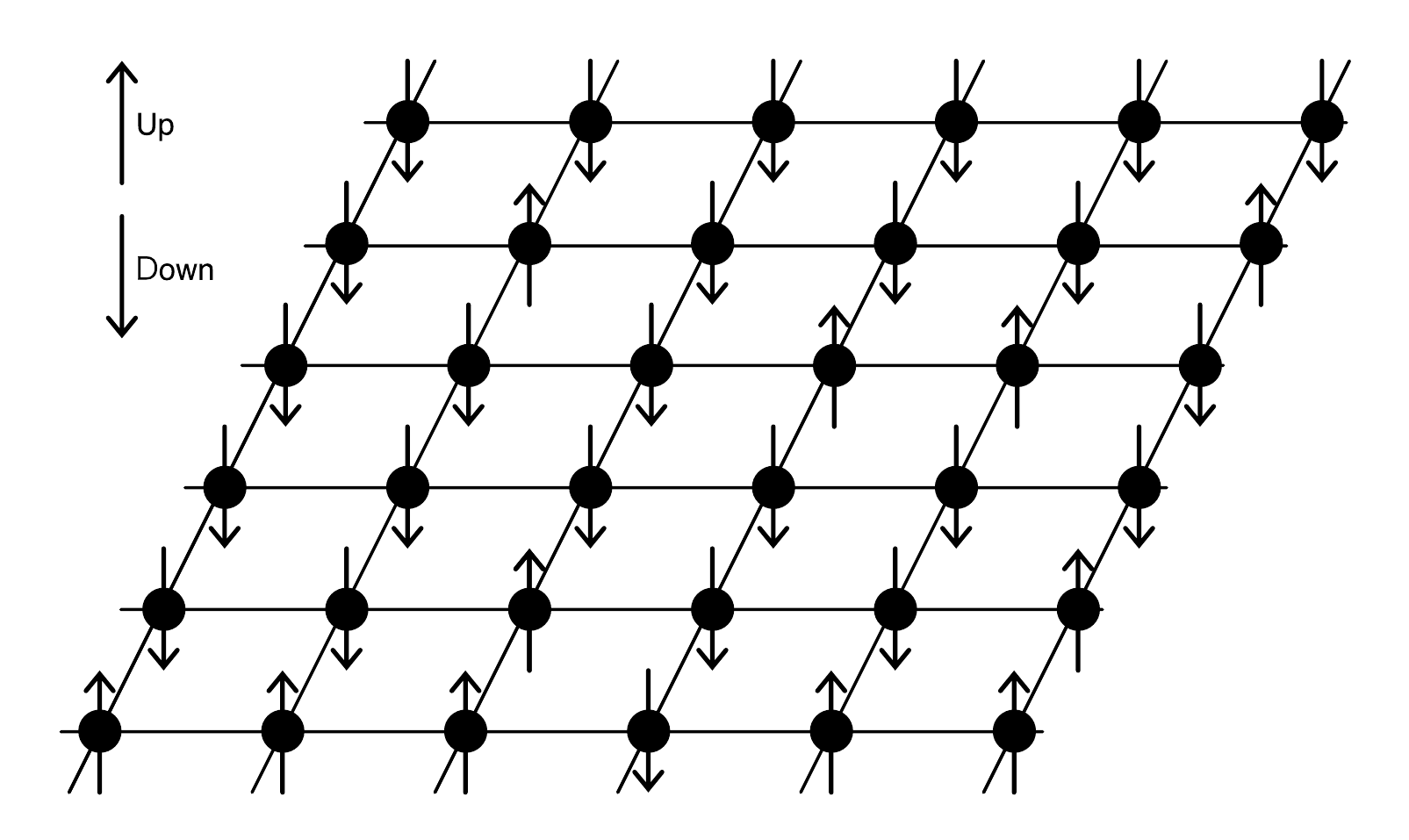}
 \caption{2D Array Ising Model}
 \label{fig:2d_ising}
 \end{center}
\end{figure}

A Hamiltonian may include an objective function, where the lower energy on the terms of the function contributes to a smaller objective value in minimization problems.
For maximization problems, we reverse the sign of the Ising model parameters for the objective function and then minimize it.
Constraints in optimization problems should also be composed of sets of terms in the Hamiltonian, where the lowest energy on the terms leads to the satisfaction of the corresponding constraints.
Therefore, the annealing process needs to obtain feasible solutions that satisfy all constraints with a good quality of the objective value by exploring the lowest energy state of the total Hamiltonian. 

Note that we can easily replace the type of decision variables from $\{+1, -1\}$ to $\{0, 1\}$, and vice versa.

\subsection{Petri Net Fundamentals}

A Petri net $\mathit{PN} = (P, T, F, \mathbf{M}_0)$ is a directed bipartite graph $N = (P, T, F)$ with initial marking $\mathbf{M}_0$.
The bipartite graph has two types of vertex sets: a set of places $P=\{P_1, P_2, ..., P_{|P|}\}$ and a set of transitions $T=\{T_1, T_2, ..., T_{|T|}\}$. 
The arc function $F:(P\times T) \cup (T\times P) \rightarrow \mathbb{N}$ defines the connection from places to transitions, and vice versa.
The returned values of $F$ indicate the number of removed tokens from the starting place when $(P\times T) \rightarrow \mathbb{N}$ or the number of generated tokens to the ending place when $(T\times P) \rightarrow \mathbb{N}$ upon the firing of the transition.

Instead of an arc function, a matrix representation is often used for the connection.
Here, $\mathit{Pre}$ and $\mathit{Post}$ show the incident matrix of size $|P|\times |T|$ to indicate the connection from places to transitions and from transitions to places, respectively.
It should be noted that $\mathit{Pre}(P_i, T_j) = F((P_i, T_j))$ and $\mathit{Post}(P_i, T_j) = F((T_j, P_i))$.

Tokens are located in places such that the token distribution on $P$ represents the status of the modeled system.
The vector $\mathbf{M}_k$ of size $|P|$ represents the number of tokens in each place $P_i \in P$ at step $k$.
In addition, $\mathbf{M}_0$, called the \textit{initial marking}, is the marking at step $0$, which corresponds to the initial status of the system.
Therefore, the Petri net $PN = (N, \mathbf{M}_0)$ represents the structure and initial states of the modeled system.

The transition $T_j \in T$ is {\it enabled} at step $k$ only when $M_k(P_i)$ $\ge \mathit{Pre}(P_i, T_j)$ for each $P_i$ in $^\bullet T_j$, 
where $^\bullet T_j$ and $T_j^\bullet$ denote the set of input and output places of $T_j$, respectively.
Similarly, $^\bullet P_i$ and $P_i^\bullet$ are the sets of input and output transitions of $P_i$, respectively.
An enabled transition $T_j$ can fire.
Firing implies the occurrence of an event in the system.
The firing of transition $T_j$ removes $\mathit{Pre}(P_i, T_j)$ tokens from each $P_i \in \ ^\bullet T_j$ and adds $\mathit{Post}(P_i, T_j)$ tokens to each $P_i \in T_j^\bullet$.
Firing changes the token distribution, which indicates the change in status of the system by the corresponding event.
The following mathematical form shows the change in status caused by the firing of transitions at step $k$:
\begin{eqnarray}
 \mathbf{M}_{k+1} & = & \mathbf{M}_{k} - \mathit{Pre} \cdot \mathbf{X}_k + \mathit{Post} \cdot \mathbf{X}_k \nonumber\\
   & = & \mathbf{M}_{k} + (\mathit{Post}-\mathit{Pre}) \cdot \mathbf{X}_k, \label{eqn:state},
\end{eqnarray}
where $\mathbf{X}_k$ is a vector of size $|T|$ showing the number of times each $T_j$ fires at step $k$, and $\mathbf{X}_k$ is called the {\it firing count vector} at step $k$.

For a quantitative analysis of the dynamical behavior of a system, {\it time} was introduced to Petri nets \citep{aalst}.
The timing methods can be categorized into three types: firing duration (FD), holding duration (HD), and enabling duration (ED) \citep{aalst}.
The FD assigns time to transitions to represent the firing duration. 
The HD is referred to as place time Petri nets, where tokens are unavailable for firing for a particular period after being located in the place.
In the last method, i.e., the ED, a transition cannot be fired for a given period after being enabled. 
Although other types can be allowed with our method, this paper focuses on timed Petri nets with the FD.

A timed Petri net is a six-tuple ${\mathit TPN} = (P, T, F, TS, FD, \mathbf{M}_0)$, where $TS$ is the set of time values, and $FD: T \rightarrow TS$ is a function that returns the firing duration time of transition $T_j\in T$.
In this study, we assume that $TS =\mathbb{N}$ and is a set of natural numbers.
A timestamp is also attached to tokens to record the token generation time. 
In the timed Petri net, the transition $T_j$ is enabled at time $k$ when each input place $P_i$ of $T_j$ has more than or equal to $F(P_i, T_j)$ tokens, and its time stamp is no more than $k$.
By firing $T_j$ at time $k$, the marking is changed according to the same rule as the Petri net described above, except that we attach the time stamp $k + FD[T_j]$ to each output token.

A colored Petri net is an extension of ordinary Petri nets, where we can introduce values (colors) to tokens, guard functions to transitions for their firing conditions, and arc functions to output arcs of transitions to calculate the colors of the output tokens\citep{cpn}.
Therefore, tokens become much more informative and allow us flexible and efficient modeling. 

This extended Petri net is denoted by $\textit{CPN}=(\Sigma, P, T, F, V, C, G, E, \mathbf{M}_0)$, where $P$, $T$, and $F$ represent a set of places, transitions, and arcs, respectively.
$\Sigma$ shows a set of colors, and $C: P\rightarrow \Sigma$ is a color function for the places. 
Only tokens with colors specified by $C(P_i)$ are located in place $P_i$.
In addition, $V$ is a set of arc variables, where $\textrm{Type}(v) \subseteq \Sigma$ for $v\in V$.
Moreover, $\textrm{Type}(v)$ is the type of variable $v$.
Here, $E$ denotes an arc function where $E((T_j, P_i))$ returns tokens on $C(P_i)$ for each output place $P_i$ based on the binding values to each of the arc variables.
The term $G$ represents a guard function that returns a Boolean value to determine whether the corresponding transition is enabled.
A marking $\mathbf{M}_k$ at step $k$ is a mapping of multiple sets of $C(P_i)$ to each place.
Transition $T_j$ is enabled at marking $\mathbf{M}_k$ with binding $b$ when for each input place $P_i$ of $T_j$, 
\begin{equation}
M_k(P_i) \ge E((P_i, T_j))(b)
\end{equation}
where $E((P_i, T_j))(b)$ denotes the result of the arc function for arc $(P_i, T_j)$ when we bind $b(v)$ for each variable $v$ in $E((P_i, T_j)).$
At marking $\mathbf{M}_k$, $\mathbf{M}_{k+1}$ is generated by firing $T_j$ with binding $b$:
\begin{equation}
M_{k+1}(P_i) = M_k(P_i) -E((P_i, T_j))(b) + E((T_j, P_i))(b).
\end{equation}
Colored Petri nets are extremely powerful in the sense that complicated systems can be easily modeled.

\section{Binary Quadratic Nets}

This section introduces a new Petri net class called \textit{binary quadratic nets} to represent the Ising and QUBO models.
The Ising model is a mathematical model in statistical mechanics.
The model is a graph in which the vertices correspond to spins, and the edges interact between spins.
Petri nets are also suitable for expressing Ising models because their fundamental components, places, and transitions can naturally represent the states of the spins and their interactions.
A token in a place can show the corresponding spin state when we introduce colors $\{-1, +1\}$ to tokens.
For QUBO models, the color type of tokens becomes $\{0, 1\}$.

In our Petri net modeling-based Ising and QUBO model formulation, Petri net models representing target optimization problems were converted into the corresponding binary quadratic nets. 
In other words, our proposed method is a net transformation from problem-domain Petri nets into binary quadratic nets.
Binary quadratic nets are entirely equivalent to Ising or QUBO models.

\subsection{Formal Definition}

\begin{Definition} \textbf{(Binary Quadratic Net)}
A binary quadratic net is a colored Petri net denoted by $\textrm{BQN} = (\hat{\Sigma}$, $\hat{P}$, $\hat{T}$, $\hat{F}$, $\hat{C}$, $\hat{w})$:
\begin{align}
 \hat{\Sigma} &\in \{\{-1, +1\}, \{0, 1\}\} \\
\hat{P} & = \{\hat{p}_1, \hat{p}_2, ..., \hat{p}_n\} \\
\hat{T} &= \{\hat{t}_{i, j} \mid \hat{p}_i, \hat{p}_j \in \hat{P}, \{\hat{p}_i, \hat{p}_j\} = \hat{t}_{i, j}^\bullet = ^\bullet\!\! \hat{t}_{i, j}, i<j\}\\
\hat{F} &: (\hat{P} \times \hat{T})\cup (\hat{T}\times \hat{P}) \rightarrow 
\begin{cases}
 1 & \textrm{if } (\hat{p}_i, \hat{t}_{i,j}) \textrm{ and } (\hat{t}_{i,j}, \hat{p}_j), \hat{t}_{i,j} \in \hat{T} \\
 0 & \textrm{otherwise}
\end{cases} \\
\hat{C}&(\hat{p}) \in \hat{\Sigma}, \forall \hat{p} \in \hat{P}\\
\hat{w} &: \hat{P}\cup \hat{T} \rightarrow \mathbb{R}
\end{align}
\label{def:bqnet}
\end{Definition}

There is one token in each place, where a place corresponds to a spin variable in the Ising models and a binary variable in the QUBO models, respectively.
For simplicity, we denote the value of a variable associated with place $\hat{p}_i$ by $M(\hat{p}_i)$, that is, $M(\hat{p}_i) \in \{-1, +1\}$ in the Ising models or $M(\hat{p}_i) \in \{0, 1\}$ in the QUBO models. 
Note that in the QUBO model, $M(\hat{p}_i)=0$ indicates that the place $\hat{p}_i$ includes a token with color $0$.

Current quantum annealing platforms have various physical topologies, chimera graphs for D-waves, and three-dimensional lattices for CMOS annealers \citep{DaisukeOKU20192018EDP7411}. 
Thus, we need to embed the logical Ising model to a specific graph topology to run the annealing process.
However, we do not treat the embedding process in this study because we can use existing embedding algorithms for each platform.
Moreover, Fujitsu Digital Annealer allows fully connected topologies.
For reference, Fig. \ref{fig:Ising_bqn_model} depicts the binary quadratic net model for the 2D array Ising model shown in Fig.\ref{fig:2d_ising}.
\begin{figure}[h]
 \begin{center}
 \includegraphics[scale=0.65]{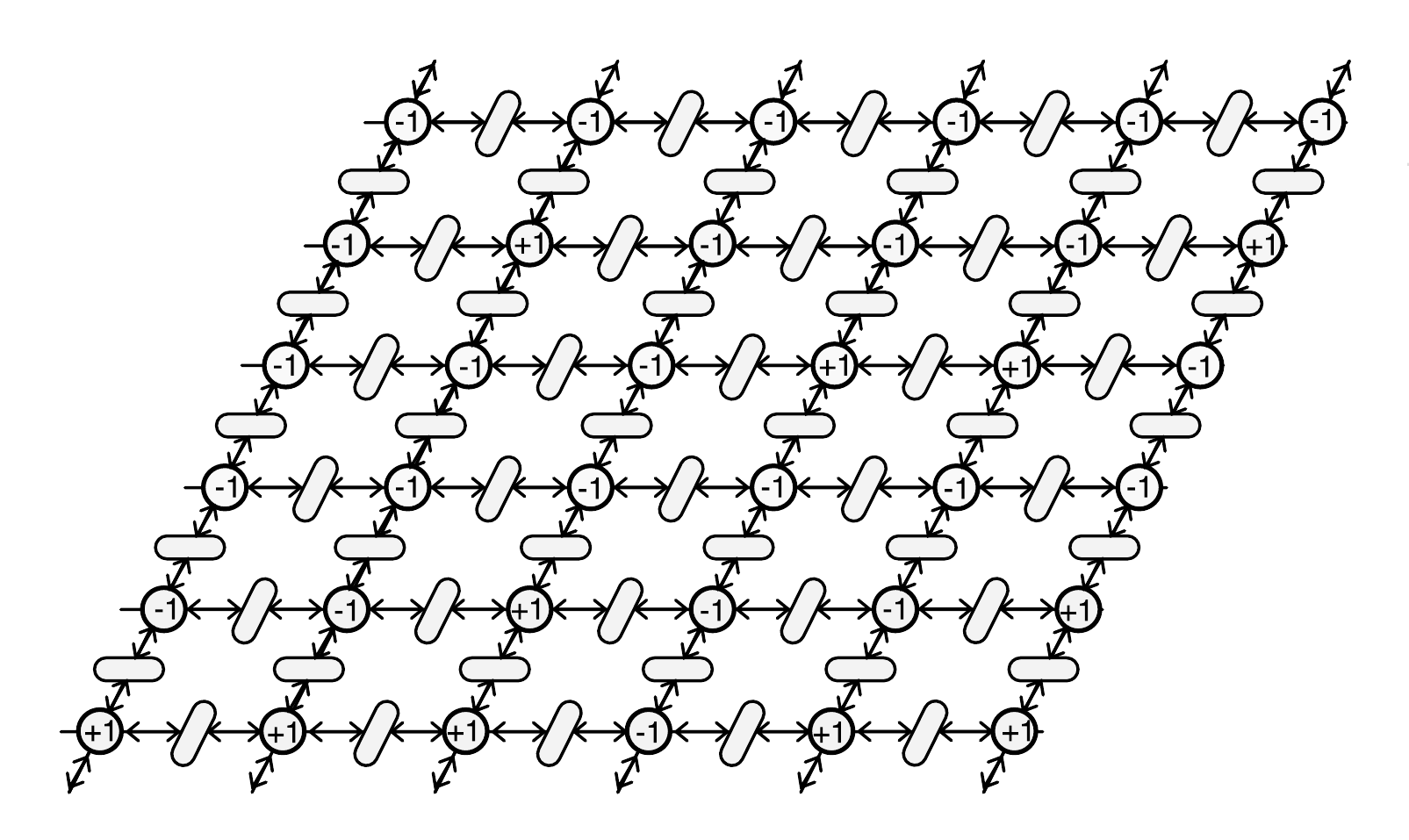}
 \caption{Binary Quadratic Net for 2D-Array Ising Model shown in Fig. \ref{fig:2d_ising}}
 \label{fig:Ising_bqn_model}
 \end{center}
\end{figure}

To combine binary quadratic nets and Ising models, we introduce a measure to represent the state of binary quadratic nets with a marking $\mathbf{M}$.
We define this as an energy function of binary quadratic nets corresponding to the Hamiltonian in the Ising models.

\begin{Definition}
\textbf{(Energy Function of Binary Quadratic Net)} 
For a binary quadratic net $\textrm{BQN} = (\hat{\Sigma}, \hat{P}, \hat{T}, \hat{F}, \hat{C}, \hat{w})$ with a marking $\mathbf{M}$, the energy function is defined as follows:
\begin{equation}
H_{BQN}(\mathbf{M}) = \sum_{\hat{p}_i \in \hat{P}} w(\hat{p}_i) M(\hat{p}_i)  + \sum_{\hat{t}_{i, j} \in \hat{T}} w(\hat{t}_{i, j})M(\hat{p}_i)M(\hat{p}_j).
\end{equation}
\label{def:enegyFunction}
\end{Definition}
The first summed terms show the energy derived from token existence at each place.
The second summed term represents the energy from the interaction between tokens on both sides of each transition. 
The weight parameters $w(\hat{p}_i)$ and $w(\hat{t}_{i, j})$ are defined for $\hat{p}_i$ and $\hat{t}_{i, j}$, respectively.
The energy function in Definition \ref{def:enegyFunction} is equivalent to the Ising model shown in (\ref{eqn:ising_model}), and the energy function is uniquely defined from the corresponding binary quadratic net.
Although it is straightforward, we simply summarize this fact as a proposition for the readability of the remaining parts.
\begin{Proposition}
For an Ising model, we have an equivalent binary quadratic net $BQN$ in the sense that the energy function $H_{BQN}(\mathbf{M})$ is equivalent to the Hamiltonian of the Ising model.
\label{prop:equivalence}
\end{Proposition}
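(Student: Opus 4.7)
The plan is to give a constructive proof: from an arbitrary Ising Hamiltonian $H(\boldsymbol{s}) = \sum_i h_i s_i + \sum_{i<j} J_{i,j} s_i s_j$ on $N$ spins, I would explicitly build a binary quadratic net $\textrm{BQN}=(\hat{\Sigma},\hat{P},\hat{T},\hat{F},\hat{C},\hat{w})$ and a marking $\mathbf{M}$, and then check termwise that $H_{BQN}(\mathbf{M}) = H(\boldsymbol{s})$. Because Definition \ref{def:bqnet} and Definition \ref{def:enegyFunction} are essentially syntactic translations of the Ising data, the argument is not deep; its role is to set up the notational bridge used in later sections.

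First I would fix the color set $\hat{\Sigma} = \{-1, +1\}$ and introduce one place $\hat{p}_i$ per spin $s_i$, so $\hat{P}=\{\hat{p}_1,\ldots,\hat{p}_N\}$, with $\hat{C}(\hat{p}_i)=\hat{\Sigma}$ for each $i$. For every pair $i<j$ with $J_{i,j}\neq 0$ I would introduce a transition $\hat{t}_{i,j}$ whose pre-set and post-set are both $\{\hat{p}_i,\hat{p}_j\}$ (so that $\hat{t}_{i,j}^\bullet={}^\bullet\hat{t}_{i,j}=\{\hat{p}_i,\hat{p}_j\}$, as required by Definition \ref{def:bqnet}), setting $\hat{F}$ to $1$ on the four arcs incident to $\hat{t}_{i,j}$ and to $0$ elsewhere. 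I would then assign the weights $\hat{w}(\hat{p}_i)=h_i$ and $\hat{w}(\hat{t}_{i,j})=J_{i,j}$, and take the marking $\mathbf{M}$ with $M(\hat{p}_i)=s_i$. A brief verification confirms each clause of Definition \ref{def:bqnet} (signature of $\hat{F}$, arity of transitions, type of $\hat{C}$, real-valuedness of $\hat{w}$).

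With the construction in place, the equivalence reduces to a line of substitution: plugging the defined weights and marking into
\begin{equation*}
H_{BQN}(\mathbf{M}) = \sum_{\hat{p}_i \in \hat{P}} \hat{w}(\hat{p}_i) M(\hat{p}_i) + \sum_{\hat{t}_{i,j} \in \hat{T}} \hat{w}(\hat{t}_{i,j}) M(\hat{p}_i) M(\hat{p}_j)
\end{equation*}
yields exactly $\sum_i h_i s_i + \sum_{i<j} J_{i,j} s_i s_j$, which is $H(\boldsymbol{s})$. The QUBO case is handled identically by instead choosing $\hat{\Sigma}=\{0,1\}$.

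Honestly, there is no real obstacle here — the only minor subtlety is making sure the pre-set/post-set convention of Definition \ref{def:bqnet} is respected (each interaction transition is bidirectionally attached to its two places rather than oriented), and handling the degenerate cases where $h_i=0$ or $J_{i,j}=0$ (one may either omit the corresponding place/transition or keep it with zero weight, since both choices leave the energy function unchanged). Once those bookkeeping points are noted, the proposition follows immediately from the construction.
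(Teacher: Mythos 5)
Your construction is correct and is exactly the correspondence the paper intends: places for spins, transitions for nonzero couplings, $\hat{w}(\hat{p}_i)=h_i$, $\hat{w}(\hat{t}_{i,j})=J_{i,j}$, and $M(\hat{p}_i)=s_i$, after which Definition \ref{def:enegyFunction} reproduces the Hamiltonian term by term. The paper itself offers no formal proof (it merely remarks that the equivalence is straightforward from Definitions \ref{def:bqnet} and \ref{def:enegyFunction}), so your write-up simply makes explicit the same argument the authors leave implicit.
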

In our binary quadratic nets, transitions represent interactive relations between tokens on both sides.
There may be numerous interaction types.
We choose interaction types carefully depending on the target optimization problems.
In the Appendix, we summarize the primitive interaction for QUBO and the Ising model, respectively.
Table \ref{atbl:quboip} shows the interaction primitives $I^{qubo}_i, i=0, 1,..., 15$, for two binary variables in $\{0, 1\}$, 
where $(0, 0), (0, 1), (1, 0)$, and $(1, 1)$ express all possible combinations of the two binary variables. In addition,
$I^{qubo}_1, I^{qubo}_7, I^{qubo}_8, I^{qubo}_9$ correspond to the well-known logical functions AND, XOR, OR, and NOR, respectively.
Moreover, $I^{qubo}_{0}, I^{qubo}_{15}$ indicate inconsistency and a tautology, respectively.
Table \ref{atbl:isingip} lists the interactions for the Ising model converted from those in Table \ref{atbl:quboip} by applying the following relation.
\begin{align}
M(\hat{p}_i)^{ising} &= 2 M(\hat{p}_i)^{qubo} -1, \label{eqn:ising_qubo} \\ 
M(\hat{p}_i)^{qubo} &= (M(\hat{p}_i)^{ising} + 1)/2, \label{eqn:qubo_ising} 
\end{align}
where $M(\hat{p}_i)^{ising}$ and $M(\hat{p}_i)^{qubo}$ denote the marking in the Ising and QUBO models, respectively.

\subsection{Binary Quadratic Net Examples}

As we describe in Proposition \ref{prop:equivalence}, binary quadratic net models are equivalent to the Ising models.
We show the binary quadratic net models for well-known graph partitioning and minimum vertex cover problems.
These problems are straightforwardly modeled with binary quadratic nets and formulated as marking problems in such nets.

\begin{Example}
\textbf{Minimum Vertex Cover:} 
For a given undirected graph $G=(V, E)$, with vertex set $V$ and edge set $E$, a vertex cover set satisfies the condition such that every edge in $E$ is incident on a vertex in the cover set.
The problem is to find the minimum vertex cover set.
The problem is known as NP-hard\citep{10.5555/574848}.

We choose the QUBO model for the problem, where we generate the place set and transition set of the binary quadratic net under the one-to-one correspondence with the vertex set and the edge set, respectively.
We express a vertex cover by a marking, where we place a token with a color of 1 to show that the corresponding vertex is in the vertex cover, and where a color of 0 is not in the cover.
To satisfy the feasibility of the vertex cover, we need to avoid a token with a color of 0 in both input places of each transition because the situation does not cover the corresponding edge.
We formulate the feasibility of the condition corresponding to $I_8^{qubo}$ in Table \ref{atbl:quboip} as follows:

\begin{align}
H_\textrm{constraint}(\mathbf{M}) &= \sum_{i<j, \hat{p}_i^\bullet \cap \hat{p}_j^\bullet \ne \emptyset} (1-M(\hat{p}_i))(1-M(\hat{p}_j)), \label{eqn:vertexcover_constraint}
\end{align}

To minimize the objective function, we attempt to reduce the number of color 1 tokens in $\mathbf{M}$.
Therefore, the following penalty function is suitable because only color 1 tokens increase the penalty.
\begin{align}
H_\textrm{cost}(\mathbf{M}) &= (\sum_{i=1}^{|P|} M(\hat{p}_i)) \label{eqn:vertexcover_cost} 
\end{align}

Based on the superposition principle, we have the total formulation based on our binary quadratic net models by combining the binary quadratic nets corresponding to (\ref{eqn:vertexcover_constraint}) and (\ref{eqn:vertexcover_cost}):
\begin{align}
H_\textrm{total}(\mathbf{M}) &= A \cdot H_\textrm{constraint}(\mathbf{M}) + B \cdot H_\textrm{cost}(\mathbf{M})
\end{align}
where $H_\textrm{constraint}(\mathbf{M})$ counts the number of transitions, both of which have a 0 color token.
In addition, $H_\textrm{cost}(\mathbf{M})$ denotes the number of places colored with 1 to minimize the vertex cover.
Moreover, $A$ and $B$ show the parameters for a trade-off between the constraint and the objective function.
The formulation is equivalent to the QUBO model in \citep{10.3389/fphy.2014.00005}.

Consider the graph shown in Fig. \ref{fig:undirected_graph} as a problem instance of the vertex cover problem. 
We can formulate the problem instance as a marking problem in the corresponding binary quadratic net in Fig. \ref{fig:bqnet_vertex_cover}.
Marking $\mathbf{M}=(1, 0, 1, 0, 0, 1, 0, 1)$ is a feasible solution to the instance.
 
\begin{figure}[h]
 \begin{center}
 \includegraphics[scale=0.3]{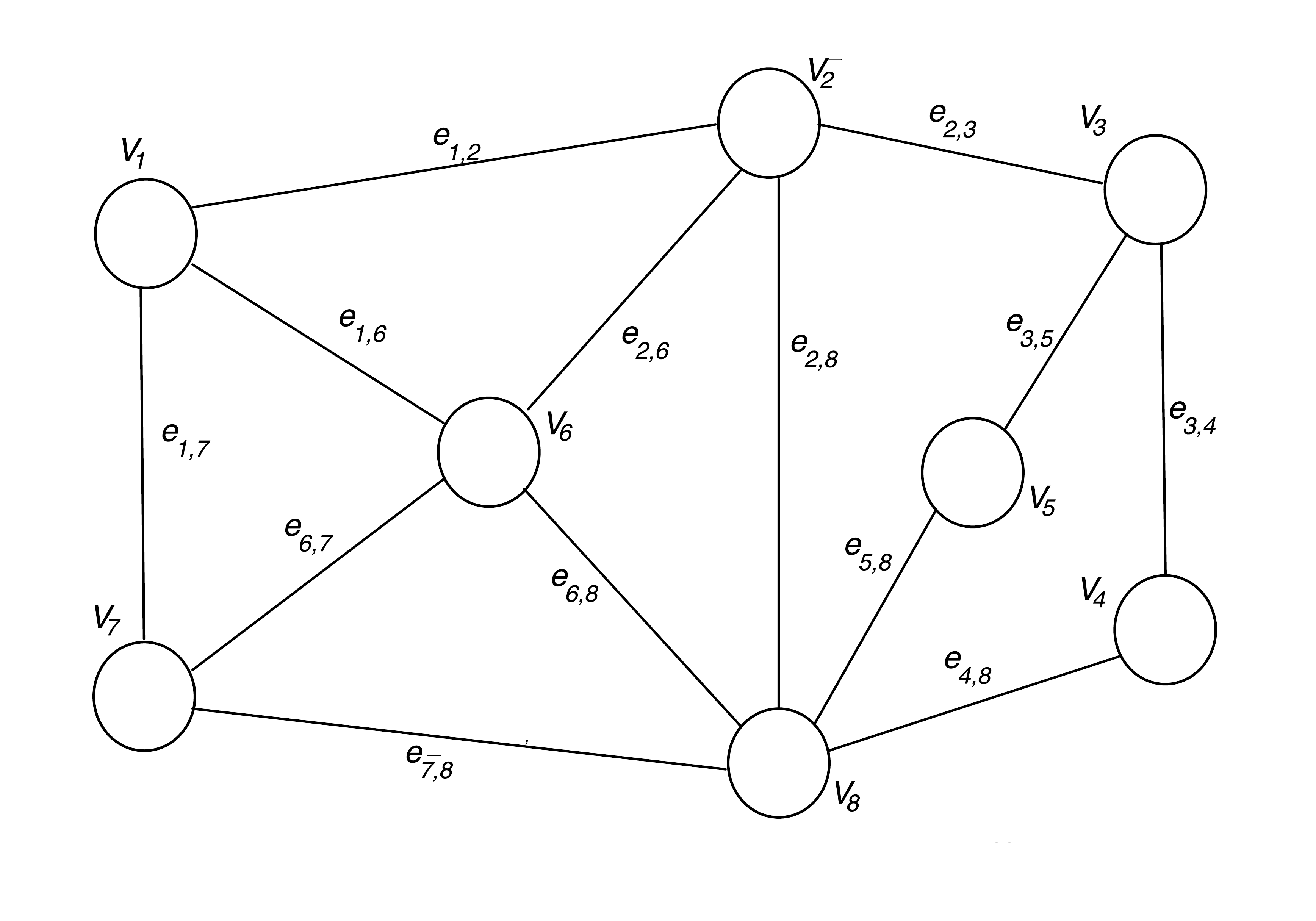}
 \caption{Example of Undirected Graphs}
 \label{fig:undirected_graph}
 \end{center}
\end{figure}

\begin{figure}[h]
 \begin{center}
 \includegraphics[scale=0.3]{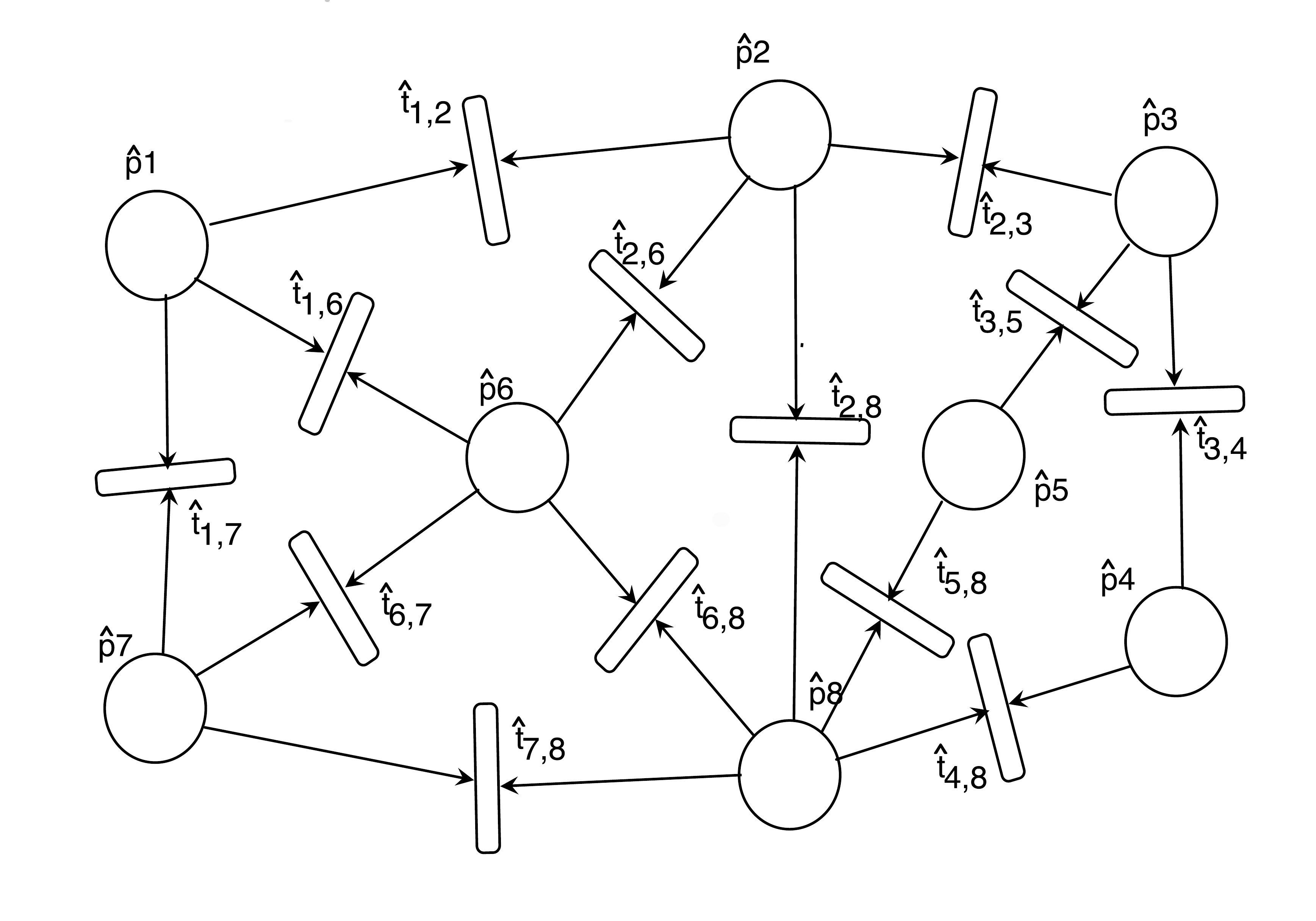}
 \caption{Binary Quadratic Net for Vertex Cover Problem Instance shown in Fig.\ref{fig:undirected_graph}}
 \label{fig:bqnet_vertex_cover}
 \end{center}
\end{figure}

\end{Example}

\begin{Example}
\textbf{Graph Partitioning:} 
Graph partitioning is an optimization problem, which is explained as follows:
Let us consider an undirected graph $G =(V, E)$ with vertex set $V$ and edge set $E$, where $|V|$ is the number of vertices. 
The problem is partitioning $V$ into two subsets whose sizes are equal to $|V|/2$, minimizing the number of edges between the subsets.
The problem is known as NP-hard\citep{10.5555/574848}.

We reduce the problem to a marking problem in binary quadratic nets, where each vertex in the given graph corresponds to a place in the binary quadratic net, and the marking such that $M(\hat{p}_i) \in \{-1, +1\}, \forall \hat{p}_i \in \hat{P}$ shows the partition.
The objective function minimizes the number of edges connecting the two partitioned groups.
The transitions have a one-to-one correspondence with the edges.
To minimize the objective function, we want to reduce the different color tokens in pairs of places that share the output transition.
We design the objective function with $I^{Ising}_6$ in Table \ref{atbl:isingip} as follows:
\begin{align}
H_\textrm{cost}(\mathbf{M}) &= \sum_{i<j, \hat{p}_i^\bullet \cap \hat{p}_j^\bullet \ne \emptyset} \frac{1-M_0(\hat{p}_i)M_0(\hat{p}_j)}{2} \label{eqn:graph_partitioning_cost} 
\end{align}
Here, $I^{Ising}_6$ outputs 1 only when both places have different color tokens, that is, $XOR$.

Second, we consider the constraint of the graph-partitioning problem and the equality in the vertex size of both partitioned groups.
Concerning $\{-1, +1\}$ logic, we can design the following energy function for the constraint because minimizing the function leads to a satisfaction of the constraint:
\begin{align}
H_\textrm{constraint}(\mathbf{M}) &= (\sum_{i=1}^{|P|} M(\hat{p}_i))^2 \label{eqn:graph_partitioning_constraint} 
\end{align}
The following penalty function shows the total energy function for graph partitioning and is equivalent to the Ising model presented in \citep{10.3389/fphy.2014.00005}.
\begin{align}
H_{total}(\mathbf{M}) &= A\cdot H_\textrm{constraint}(\mathbf{M}) + B\cdot H_\textrm{cost}(\mathbf{M}) \label{eqn:graph_partitioning} 
\end{align}

Figure \ref{fig:bqnet_graph_partitioning} shows the binary quadratic net for the graph partitioning example, where the subnet composed of the places and the transitions connected by the solid arcs correspond to $H_\textrm{cost}(\mathbf{M})$, and the other subnet with all places and transitions connected by the dotted arcs are added based on $H_\textrm{constraint}(\mathbf{M})$.

One of the feasible solutions is such that $M(\hat{p}_1)=M(\hat{p}_2)=M(\hat{p}_6)=M(\hat{p}_7)=+1$ and $M(\hat{p}_3)=M(\hat{p}_4)=M(\hat{p}_5)=M(\hat{p}_8)=-1$.

\begin{figure}[h]
 \begin{center}
 \includegraphics[scale=0.3]{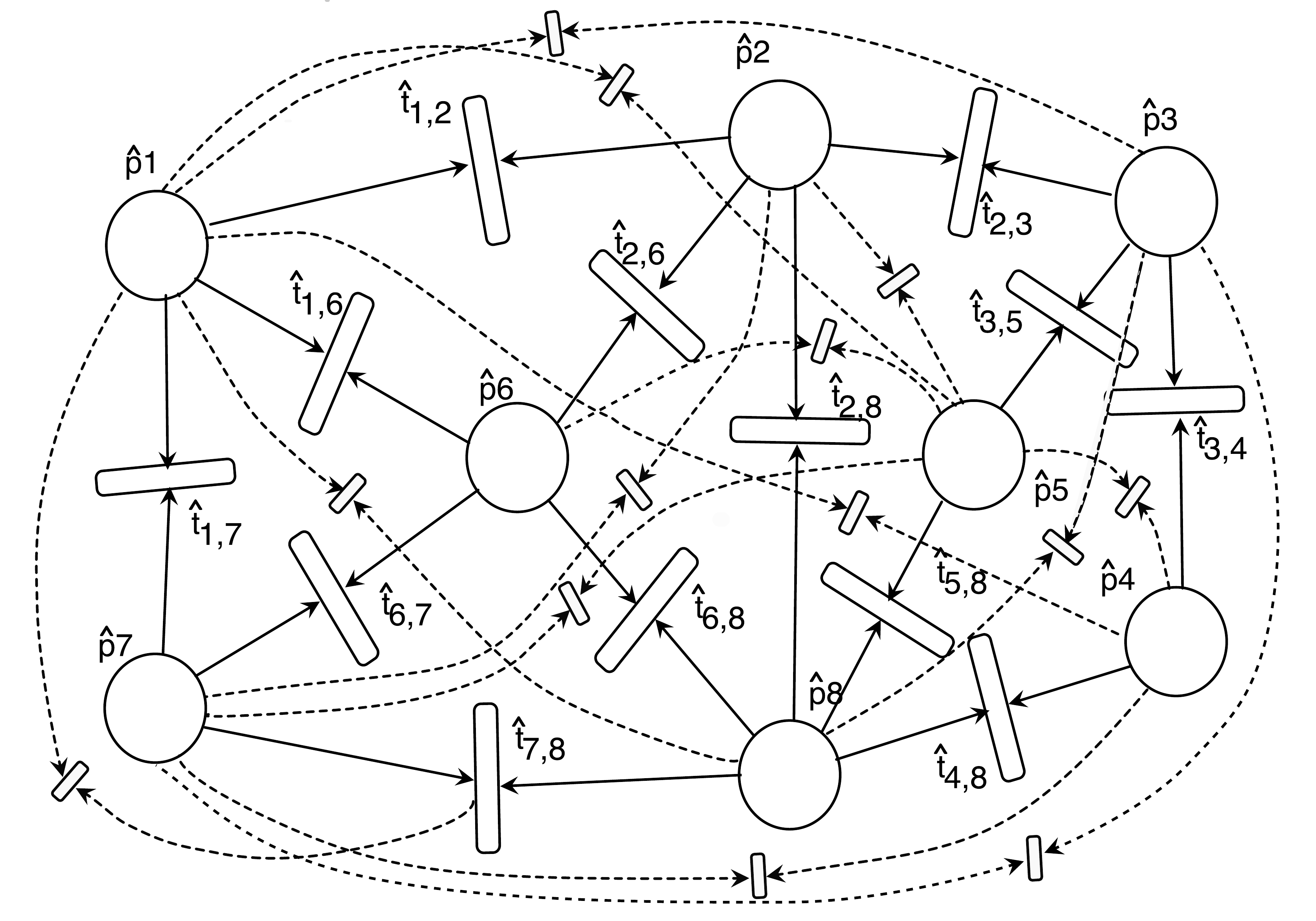}
 \caption{Binary Quadratic Net of Graph Partitioning Problem for Instance shown in Fig.\ref{fig:undirected_graph}}
 \label{fig:bqnet_graph_partitioning}
 \end{center}
\end{figure}
\end{Example}

\section{Binary Quadratic Net Construction from Problem Domain Petri Nets}

In the previous section, we modeled combinatorial optimization problems directly, a minimum vertex cover problem, and a graph partitioning problem with binary quadratic nets and formulated them as marking problems.
However, they are exceptional cases.
In general, we can meet the conceptual gap between combinatorial optimization problems and target binary quadratic nets.
Note that the problem is more serious in the direct formulation of the Ising or QUBO models.

Our approach attempts to minimize the gap by converting problem-domain Petri net models, represented by timed Petri nets and colored Petri nets, into the corresponding binary quadratic nets.
This section proposes a method for constructing target binary quadratic nets from problem-domain Petri net models.

\subsection{Incremental Construction based on Superposition Principle}

A binary quadratic net can be composed incrementally by combining binary quadratic subnets, each corresponding to a constraint or an objective function in the original optimization problem.
This superposition principle of the net structure and weight values simplifies the binary quadratic net construction. 
The composition is straightforward, where we add the weight values on the places and transitions if the same places and transitions in the subnets are to be combined; otherwise, add new places and transitions with their weight values.
In the previous section, we observed this process in the graph partitioning example (Example 2). Even though we did not consider the weight values, we combined the two subnets.
The following definition formally represents the binary quadratic net composition.
\begin{Definition}
\textbf{(Binary Quadratic Net Composition)} 
For two given binary quadratic nets $\textrm{BQN}_h$ = ($\hat{\Sigma}$, $\hat{P}_h$, $\hat{T}_h$, $\hat{F}_h$, $\hat{C}_h$, $\hat{w}_h)$ and $\textrm{BQN}_k = (\hat{\Sigma}, \hat{P}_k, \hat{T}_k, \hat{F}_k, \hat{C}_k, \hat{w}_k)$ such that the model types of both nets, Ising or QUBO, are the same, the new binary quadratic net $\textrm{BQN} = (\hat{\Sigma}$, $\hat{P}$, $\hat{T}$, $\hat{F}$, $\hat{C}$, $\hat{w})$ is composed based on the following superposition principle:
\begin{align}
 \hat{\Sigma} &\in \{\{-1, +1\}, \{0, 1\}\} \\
\hat{P} & = \hat{P}_h \cup \hat{P}_k \label{eqn:compositionP}\\
\hat{T} &= \hat{T}_h \cup \hat{T}_k \label{eqn:compositionT}\\
\hat{F} &: (\hat{P} \times \hat{T})\cup (\hat{T}\times \hat{P}) \rightarrow 
\begin{cases}
 1 & (\hat{p}_i, \hat{t}_{i,j}) \textrm{ and } (\hat{p}_j, \hat{t}_{i,j}), \hat{t}_{i,j} \in \hat{T}, \\
 0 & \textrm{otherwise}
\end{cases}  \label{eqn:compositionArc}\\
\hat{C}: & \hat{P} \rightarrow 
\begin{cases}
 \{-1, +1\} & \textrm{Ising\ Model}\\
 \{0, 1\} & \textrm{QUBO\ Model}
\end{cases} \\
\hat{w} &: \hat{P}\cup \hat{T} \rightarrow 
\begin{cases}
 w_h(x) + w_k(x) & x \in \hat{P}_h \cup \hat{T}_h \textrm{ and } x \in \hat{P}_k \cup \hat{T}_k, \\
 w_h(x) & x \in \hat{P}_h \cup \hat{T}_h \textrm{ and } x \notin \hat{P}_k \cup \hat{T}_k, \\
 w_k(x) & x \in \hat{P}_k \cup \hat{T}_k \textrm{ and } x \notin \hat{P}_h \cup \hat{T}_h, \\
 0 & \textrm{otherwise}
\end{cases} \label{eqn:compositionW}
\end{align}
\label{def:bqnComposition}
\end{Definition}

The following proposition is a straightforward but essential property for validating our method.
\begin{Proposition}\textbf{(Superposition Property)}
Let $\textrm{BQN}$ be composed from $\textrm{BQN}_h$ and $\textrm{BQN}_k$.
The following properties hold.
\begin{align}
H_{BQN}(\mathbf{M}) &= H_{BQN_h}(\mathbf{M}) +  H_{BQN_k}(\mathbf{M}), & (Additivity), \label{eqn:lineality1}\\
H_{A\cdot BQN}(\mathbf{M}) &= A\cdot H_{BQN}(\mathbf{M}), &(Homogeneity), \label{eqn:lineality2}
\end{align}
where $A$ is a scalar, and $H_{A\cdot BQN}(\mathbf{M})$ is the energy function of $\textrm{BQN}$, such that we replace the weight function $\hat{w}$ with $A\cdot \hat{w}$.
\label{prop:superposition}
\end{Proposition}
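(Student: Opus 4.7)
The plan is to verify both properties by direct expansion of the energy function in Definition \ref{def:enegyFunction}, using the composition rules of Definition \ref{def:bqnComposition}. The key observation is that the energy function is linear in the weight function, so all the work reduces to checking that the composed weight $\hat{w}$ distributes correctly across $\hat{P}_h \cup \hat{P}_k$ and $\hat{T}_h \cup \hat{T}_k$. I treat $\mathbf{M}$ on the composed net as automatically restricting to markings on each subnet via $M|_{\hat{P}_h}$ and $M|_{\hat{P}_k}$; this is well-defined because Definition \ref{def:bqnet} forces every transition $\hat{t}_{i,j}\in\hat{T}_\ell$ to have both its incident places $\hat{p}_i,\hat{p}_j\in\hat{P}_\ell$, so no transition term refers to a place outside its own subnet.

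For additivity, I would partition $\hat{P}$ into the three disjoint pieces $\hat{P}_h\cap\hat{P}_k$, $\hat{P}_h\setminus\hat{P}_k$, and $\hat{P}_k\setminus\hat{P}_h$, and partition $\hat{T}$ analogously. By (\ref{eqn:compositionW}), on the intersection the weight satisfies $\hat{w}(x)=w_h(x)+w_k(x)$, while on each symmetric difference it agrees with whichever of $w_h$ or $w_k$ is defined there. Extending $w_h$ and $w_k$ by $0$ outside their domains (again consistent with (\ref{eqn:compositionW})), this collapses to the clean identity $\hat{w}=w_h+w_k$ on all of $\hat{P}\cup\hat{T}$. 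Substituting into $H_{BQN}(\mathbf{M})$ and splitting the sum gives
\begin{equation*}
H_{BQN}(\mathbf{M}) = \sum_{\hat{p}_i\in\hat{P}_h} w_h(\hat{p}_i)M(\hat{p}_i) + \sum_{\hat{t}_{i,j}\in\hat{T}_h} w_h(\hat{t}_{i,j})M(\hat{p}_i)M(\hat{p}_j) + (\text{same with } k),
\end{equation*}
which is exactly $H_{BQN_h}(\mathbf{M})+H_{BQN_k}(\mathbf{M})$.

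Homogeneity is immediate: replacing $\hat{w}$ by $A\hat{w}$ multiplies every single-place term and every transition term in $H_{BQN}(\mathbf{M})$ by $A$, and the scalar factors out of the entire sum.

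The only mildly delicate point, and the one I would flag as the main bookkeeping obstacle, is the convention-level issue above: one must confirm that restricting the composed marking to $\hat{P}_h$ (respectively $\hat{P}_k$) produces a valid marking for $\textrm{BQN}_h$ (respectively $\textrm{BQN}_k$) and that the transition sums in each subnet reference only places within that subnet. Once this is explicitly noted, both claims reduce to the linearity of $H$ in $\hat{w}$ and a disjoint-union decomposition of the index sets, with no substantive computation remaining.
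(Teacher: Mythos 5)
Your proposal is correct and follows essentially the same route as the paper's proof: both arguments reduce to the fact that the composition rule makes $\hat{w}$ decompose as $w_h + w_k$ (extended by zero off each subnet) and that $H_{BQN}$ is linear in the weight function, with homogeneity following immediately by factoring out the scalar. Your version is simply more explicit about the disjoint-union bookkeeping and the restriction of the marking to each subnet, details the paper leaves implicit.
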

\begin{proof}
In Definition \ref{def:bqnComposition}, the energy function is defined by the net structure and weight function $\hat{w}$.
Based on the composition rules in (\ref{eqn:compositionP}), (\ref{eqn:compositionT}), and (\ref{eqn:compositionArc}), all structural properties of $\textrm{BQN}_h$ and $\textrm{BQN}_k$ are transformed into $\textrm{BQN}$.
Based on the rule for the weight function (\ref{eqn:compositionW}), we can confirm that $\hat{w}(\hat{p}_{i})$ and $\hat{w}(\hat{t}_{i, j})$ can be divided into $\hat{w}_h(\hat{p}_{i})$ + $\hat{w}_k(\hat{p}_{i})$ and $\hat{w}_h(\hat{t}_{i, j})$ + $\hat{w}_k(\hat{t}_{i, j})$ for the common place $\hat{p}_{i}$ and transition $\hat{t}_{i, j}$, respectively.
Therefore, the additivity in (\ref{eqn:lineality1}) holds.

The homogeneity property with a degree of 1 is given by Definition \ref{def:enegyFunction} if we replace the weight function $\hat{w}$ with $A\cdot \hat{w}$.
\end{proof}

Owing to the additivity and homogeneity properties in Proposition \ref{prop:superposition}, we have the following corollary:

\begin{Corollary}
Binary quadratic nets can be composed by the incremental application of Definition \ref{def:bqnComposition}, in which we can scale the weight function $\hat{w}$ in subnets with a constant factor.
\end{Corollary}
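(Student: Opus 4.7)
The plan is to derive the corollary from Proposition \ref{prop:superposition} by induction on the number $n$ of subnets being combined. First I would argue that the binary pairwise composition in Definition \ref{def:bqnComposition} is associative and commutative at the level of both the underlying net structure and the weight function: commutativity follows because the place/transition sets are combined by set union in (\ref{eqn:compositionP})--(\ref{eqn:compositionT}) and the weight rule (\ref{eqn:compositionW}) is symmetric in $h$ and $k$; associativity follows because, for any element $x$ appearing in some subnets, the resulting weight is $\sum_{i : x \in \hat{P}_i \cup \hat{T}_i} \hat{w}_i(x)$ regardless of the order of pairwise combinations. This makes ``incremental application'' unambiguous.

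For the base case $n=1$ the claim is trivial, and for $n=2$ the claim is exactly the additivity statement (\ref{eqn:lineality1}) of Proposition \ref{prop:superposition}. For the inductive step, given subnets $\textrm{BQN}_1,\ldots,\textrm{BQN}_n$, I would first form $\textrm{BQN}' = \textrm{BQN}_1 \oplus \cdots \oplus \textrm{BQN}_{n-1}$ by the inductive hypothesis, satisfying $H_{\textrm{BQN}'}(\mathbf{M}) = \sum_{i=1}^{n-1} H_{\textrm{BQN}_i}(\mathbf{M})$, and then apply Definition \ref{def:bqnComposition} once more to combine $\textrm{BQN}'$ with $\textrm{BQN}_n$. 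A second invocation of additivity (\ref{eqn:lineality1}) gives $H_{\textrm{BQN}}(\mathbf{M}) = H_{\textrm{BQN}'}(\mathbf{M}) + H_{\textrm{BQN}_n}(\mathbf{M}) = \sum_{i=1}^{n} H_{\textrm{BQN}_i}(\mathbf{M})$.

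To incorporate the scaling claim, I would replace each $\textrm{BQN}_i$ with $A_i \cdot \textrm{BQN}_i$, that is, the same net with weight function $A_i \cdot \hat{w}_i$. Homogeneity (\ref{eqn:lineality2}) yields $H_{A_i \cdot \textrm{BQN}_i}(\mathbf{M}) = A_i \cdot H_{\textrm{BQN}_i}(\mathbf{M})$ for each $i$, and the induction above then produces $H_{\textrm{BQN}}(\mathbf{M}) = \sum_{i=1}^{n} A_i \cdot H_{\textrm{BQN}_i}(\mathbf{M})$, which is the desired scaled superposition. The main obstacle I expect is the bookkeeping around associativity when three or more subnets share a common place or transition: Definition \ref{def:bqnComposition} writes the weight update only for a binary composition, so one must verify that iterating pairwise compositions accumulates weights correctly and that the arc function (\ref{eqn:compositionArc}) remains consistent at every intermediate stage. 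Once this verification is stated cleanly, the rest of the argument is a routine induction driven entirely by Proposition \ref{prop:superposition}.
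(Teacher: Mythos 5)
Your proposal is correct and follows essentially the same route as the paper, which derives the corollary directly from the additivity and homogeneity properties of Proposition \ref{prop:superposition} without writing out the induction. Your explicit verification that iterated pairwise composition is associative and commutative (so that the weight of a shared element accumulates to $\sum_{i} \hat{w}_i(x)$ independently of the order of combination) is a detail the paper leaves implicit, and it checks out against the weight rule (\ref{eqn:compositionW}).
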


In our approach, we construct a target binary quadratic net based on the incremental compositions of binary quadratic subnets.
Each subnet is converted from a property of the Petri net model representing the optimization problems. We call the Petri net \textit{problem-domain Petri net}.
The properties of the problem-domain Petri net models are expressed with marking or firing sequences.
To focus on markings or firing sequences, we employ marking-based or firing-based constructions.
In the following subsections, we assume that binary quadratic nets are QUBO models unless otherwise stated, but can be converted into Ising models by using the conversion rule (\ref{eqn:ising_qubo}).

\subsection{Marking-based Construction}

Let us denote a problem-domain Petri net for a target optimization problem by $\mathit{N} = (P, T, F)$ with $P=\{P_1, P_2, ..., P_n\}$ and $T=\{T_1, T_2, ..., T_m\}$.
Marking $M_k(P_j)$ represents a set of tokens in place $P_j$ at time step $k$.
In addition, $\mathbf{M}_k$ is a vector of size $|P|$, showing a token distribution on the Petri net at time step $k$.
The marking trajectory of a Petri net model, $\mathbf{M}_0, \mathbf{M}_1, ..., \mathbf{M}_K$ denotes the status changes triggered by the firing of transitions. 

In the marking-based construction of binary quadratic nets, we represent a marking trajectory of the problem domain Petri net by the place set of the target binary quadratic net.
If each place $P_i$ in the problem-domain Petri net has at most one token at any time and the maximum step is $K$, we initially prepare the following place set of the target binary quadratic net.
\begin{align}
\hat{P} &= \{\hat{p}_i^k | P_i \in P, k=0, 1, 2, ..., K\}
\end{align}
If $M_k(P_i)$ is a natural number, that is, more than one token or a color token with a natural number value is possible in each place in $P$, we need to prepare more places because each place in the binary quadratic nets has one token with color in $\{0, 1\}$ or $\{-1, +1\}$.
\begin{align}
\hat{P} &= \{\hat{p}_{i, n}^k | P_i \in P, n = 0, 1, 2, ..., N, k=0, 1, 2, ..., K\},
\end{align}
where $N$ is the possible maximum value of $M_k(P_{i})$.
At the same time, we require a one-hot constraint because only one place between $\hat{p}_{i, n}^k$ for $n=0, 1, ..., N$ for each $i$ and $k$ should be marked with $1$, and the others should be marked with $0$:
\begin{align}
\sum_{n=0}^{N}M(\hat{p}_{i, n}^k) & = 1, \forall i, k
\end{align}
As the energy function representation, we need the following:
\begin{align}
H_{onehot}(\mathbf{M}) =(\sum_{n=0}^{N}M(\hat{p}_{i, n}^k) & - 1)^2, \forall i, k \label{eqn:onehot}
\end{align}
Note that we can obtain the corresponding binary quadratic net from an energy function.

\subsubsection{Boundedness}
The boundedness is an essential characteristic to ensure avoiding an overflow in the system behavior.
In the Petri net theory, we can express this characteristic as follows.
\begin{equation}
M_k(P_i) \le U_i, \forall i, k \label{eqn:upperbound} 
\end{equation}
where $U_i$ is the upper bound for $P_i$.
We can convert the constraint into a binary quadratic net and represent it as the energy function under the one-hot constraint (\ref{eqn:onehot}):
\begin{align}
H_{boundedness}(\mathbf{M}) &= \sum_{k=0}^{K+1} \sum_{i=1}^{|P|} (\sum_{n=0}^{N}n M(\hat{p}_{i, n}^k) - U_i)^2, \label{eqn:upperbound1}
\end{align}
Function (\ref{eqn:upperbound1}) is sufficient for the equality constraint $M_k(P_i) = U_i$, but not for the upper bound.
Therefore, we improve the function by introducing ancilla places, $\hat{u}_{i, m}, i=1, 2, ..., ,|P|$, $m=0, 1, ..., U_i$. 
This technique is well known in the Ising model formulation  \citep{10.3389/fphy.2014.00005}, \citep{pyqubo}.
\begin{align}
H_{boundedness}(\mathbf{M}) &= \sum_{k=0}^{K+1} \sum_{i=1}^{|P|} (\sum_{n=0}^{N}n M(\hat{p}_{i, n}^k) - \sum_{m=0}^{U_i} m \hat{u}_{i, m})^2 + (\sum_{m=0}^{U_i}{\hat{u}_{i, m}} -1)^2\label{eqn:upperbound2}
\end{align}
The upper bound constraint appears in numerous optimization problems.
The knapsack constraint is a well-known example of this constraint for the knapsack place.
We can also express the boundedness of specific places $P_i$ by removing the other places from the function.

\subsubsection{Invariant}

The boundedness shown in (\ref{eqn:upperbound}) denotes inequality constraints.
The invariance leads to equality constraints based on markings.
Note that the invariance is different from the structural invariance of the net theory.
The behavioral invariance requires that the total weighted sum of the tokens be equal among the firing sequences.
The sum may be the cost required for the resource to operate the system.
The following constraint shows that the weighted sum becomes $W$ for all $k$.
\begin{equation}
\sum_{i=1}^{|P|} h_i M_k(P_i) = W, k=0, 1, ..., K. \label{eqn:invariant} 
\end{equation}

We can convert the constraint into a binary quadratic net and represent it as the energy function under the one-hot constraint (\ref{eqn:onehot}):
\begin{align}
H_{invariant}(\mathbf{M}) &= \sum_{k=0}^{K+1} (\sum_{i=1}^{|P|} \sum_{n=0}^{N}h_i n M(\hat{p}_{i, n}^k) - W)^2) \label{eqn:invariant1}
\end{align}

\subsection{Firing-based Construction}

A firing count vector $\mathbf{X}_k$ represents the firing counts of each transition in a Petri net model at step $k$, where $X_k(T_j)$ denotes the firing counts of transition $T_j$ at time step $k$.
Elements of $\mathbf{X}_k$ are usually one of $\{0, 1\}$; that is, transitions can fire only once at each time step.
We call this single firing restriction \textit{single-server semantics} in Petri net theory. However, any natural number values (allowing more than one) are also possible with \textit{infinite server semantics}. 
For the single-server semantics, we generate a place set $\hat{P}$ in the target binary quadratic net, where $M(\hat{p}_i^k)$ corresponds to $X_k(T_i)$:
\begin{align}
\hat{P} &= \{\hat{p}_i^k | T_i \in T, k=0, 1, 2, ..., K\}
\end{align}
Similarly, we can extend the single-firing semantics to the $N$ times firing semantics.
Note that an infinite number of firing times is possible mathematically but impossible practically. 
Thus, we restrict the maximum number reasonably to $N$.
\begin{align}
\hat{P} &= \{\hat{p}_{i, n}^k | T_i \in P, n = 0, 1, 2, ..., N, k=0, 1, 2, ..., K\},
\end{align}
As the energy function representation, we need the same one-hot constraint (\ref{eqn:onehot}):

\subsubsection{Resource Conflict}

If transitions $T_i$ and $T_j$ have a common input place and conflict with a single token, $X_k(T_i) = X_k(T_j) =1$ cannot be allowed.
Therefore, we need the following energy function:
 \begin{align}
H_{conflict}(\mathbf{M}) &= \sum_{k=0}^{K}\sum_{(T_i, T_j) \in C} M(\hat{p}_{i}^k)M(\hat{p}_{j}^k), \label{eqn:conflict}
\end{align}
where $C$ is a set of conflict transitions. In addition, $C$ can be extracted from the problem domain Petri net. 

In timed Petri net models, we need to consider the firing duration.
Let $\mathit{N} = (P, T, F, TS, FD)$ with $P=\{P_1, P_2, ..., P_n\}$ and $T=\{T_1, T_2, ..., T_m\}$ be a timed Petri net, where $FD:T\rightarrow \mathbb{N}$ is a function that returns the firing duration.
We then extend the resource conflict in the stepwise firing into the timed firing. 
 \begin{align}
H_{conflict}(\mathbf{M}) &= \sum_{(T_i, T_j, k, h) \in C^{timed}} M(\hat{p}_{i}^k)M(\hat{p}_{j}^h),\label{eqn:conflict2} \\
C^{timed} &= \{(T_i, T_j, k, h)|\forall (T_i, T_j) \in C, h \le k+FD(T_i)\ \mathrm{or}\ k\le h+ FD(T_j)\}, \label{eqn:conflict3} 
\end{align}
where $C^{timed}$ is the timed conflict set such that conflict transitions cannot fire until the firing duration is complete.
We can obtain $C^{timed}$ from the given timed Petri net.

\subsubsection{Firing Count}
In some applications, we must specify the number of firing occurrences for each transition.

\begin{equation}
\sum_{k=0}^{K}X_k(T_i) = FC_i, \forall i \label{eqn:firingcount} 
\end{equation}
where $FC_i$ is the specified number of firings of $T_i$.

We can convert the constraint into a binary quadratic net and represent it as the energy function under the one-hot constraint (\ref{eqn:onehot}):
\begin{align}
H_{firings}(\mathbf{M}) &= \sum_{i=1}^{|T|} ( \sum_{k=0}^{K}\sum_{n=0}^{N}n M(\hat{p}_{i, n}^k) - FC_i)^2) \label{eqn:finringcount2}
\end{align}
Note that $M(\hat{p}_{i, n}^k)$ in the binary quadratic net corresponds to $X_k(T_i)$ in the problem domain Petri net.

Assuming that each transition $T_i$ should fire exactly once during $\mathbf{X}_0$, $\mathbf{X}_1$, ..., $\mathbf{X}_{K}$, the function (\ref{eqn:firingcount}) can be represented as follows:
\begin{align}
H_{firings}(\mathbf{M}) &= \sum_{i=1}^{|T|} (\sum_{k=0}^{K}M(\hat{p}_{i}^k) - 1)^2 \label{eqn:finringcount3}
\end{align}
In practical cases, this constraint is commonly used.

\subsubsection{Precedence Relation}
Let us assume again that each transition $T_i$ should fire exactly once during $\mathbf{X}_0$, $\mathbf{X}_1$, ..., $\mathbf{X}_{K}$.
We consider the precedence relation between the firing of transitions.
If $T_i$ precedes structurally to $T_j$, that is, $T_i^\bullet \subseteq\ ^\bullet T_j$ and $^\bullet T_i \not\subseteq T_j^\bullet$, the following penalty function is required.
 \begin{align}
H_{precedence}(\mathbf{M}) &= \sum_{k=0}^{K}\sum_{(T_i, T_j) \in Prec, h\le k} M(\hat{p}_{i}^k)M(\hat{p}_{j}^h), \label{eqn:precedence}
\end{align}
where $Prec$ is the set of precedence relations. Here, $Prec$ can be extracted from the problem domain Petri net. 
Similar to the firing conflict, we consider timed Petri nets by introducing the firing duration.

 \begin{align}
 H_{precedence}(\mathbf{M}) &= \sum_{k=0}^{K}(\sum_{(T_i, T_j) \in Prec, h\le k+FD(T_i)} M(\hat{p}_{i}^k)M(\hat{p}_{j}^{h})) \label{eqn:precedence2}
 \end{align}

\subsection{Application Example 1 (Marking-based Construction): Traveling Salesman Problems}
We consider the traveling salesman problem as an example of a marking-based construction.
Let $\mathit{N} = (P, T, F, TS, FD)$ with $P=\{P_1, P_2, ..., P_n\}$ and $T=\{T_{i, j}| \forall (P_i, P_j) \in P\times P \}$ be a timed Petri net, where $FD:T\rightarrow \mathbb{N}$ is a function that returns the firing duration.
Each place $P_i$ corresponds to a city, and the transition $T_{i. j}$ denotes the path from $P_i$ to $P_j$. 
The firing of $T_{i, j}$ indicates the movement from $P_i$ to $P_j$. 
The initial marking $\mathbf{M}_0$ includes only one token at $P_1$.

Because the salesman visits each city only once from the problem definition,
\begin{align}
\sum_{k=0}^{|P|-1} M_k(P_i) &= 1, \forall P_i \in P.
\end{align}
In addition, because the salesman should be at one place for each step,
\begin{align}
\sum_{i=1}^{|P|} M_k(P_i) &= 1, k=0, 1, ..., |P|-1.
\end{align}
The objective function is to minimize the total time to visit all cities and return to the starting place.
\begin{align}
distance & = \sum_{i=1}^{|P|} \sum_{j=1}^{|P|} (FD(T_{i,j})\cdot \sum_{k=0}^{|P|-1}M_k(P_i)M_{k+1}(P_j))
\end{align}
By converting the constraints and objective function in the problem-domain Petri net into the binary quadratic nets, we have
\begin{align}
H_{visitingOnce}(\mathbf{M}) &= \sum_{i=1}^{|P|} (\sum_{k=0}^{|P|-1} M(\hat{p}_i^k)-1)^2 \\
H_{singleness}(\mathbf{M}) &= \sum_{k=0}^{|P|-1} (\sum_{i=1}^{|P|} M(\hat{p}_i^k) -1)^2\\
H_{distance}(\mathbf{M}) & = \sum_{i=1}^{|P|} \sum_{j=1}^{|P|} (FD(T_{i,j})\cdot \sum_{k=0}^{|P|-1}M(\hat{p}_i^k)M(\hat{p}_j^{k+1})).
\end{align}
The total binary quadratic net is as follows:
\begin{align}
H_{total}(\mathbf{M}) &= A\cdot H_{visitingOnce}(\mathbf{M}) + B\cdot H_{singleness}(\mathbf{M}) + C \cdot H_{distance}(\mathbf{M}),
\end{align}
where $A$, $B$, and $C$ denote the scale factors of the corresponding subnets.
The formulation is equivalent to the QUBO model in \citep{10.3389/fphy.2014.00005}.

\subsection{Application Example 2 (Firing-based Construction): Job-shop Scheduling Problems}
We implemented our method by incorporating CPNTools \citep{cpn}, SNAKES\citep{10.1007/978-3-319-19488-2_13}, and PyQUBO \citep{pyqubo}.
As an example, we modeled a simple job-shop scheduling problem with three jobs, four tasks per job, and three shared resources.
Figure \ref{fig:jobshop} shows the model drawn by the GUI software of CPNTools.
Three sequential systems, 
\begin{align}
&p0, t0, p1, t1, p2, t2, p3, t3, p4, \nonumber \\
&p5, t4, p6, t5, p7, t6, p8, t7, p9, \nonumber \\
&p10, t8, p11, t9, p12, t10, p13, t11, p14,\nonumber 
\end{align} represent the jobs, and places $m0, m1, m2$ represent the resources.
Our software reads the model, represents the problem-domain Petri net model with SNAKES, and then extracts the necessary information to convert it into the target binary quadratic net.

The following are the energy functions of the binary quadratic subnets corresponding to the precedence relation constraint between tasks in each job, the resource conflict constraint for each shared resource, and the firing count constraint for each task.
 
 \begin{align}
 H_{precedence}(\mathbf{M}) &= \sum_{k=0}^{MaxTime}(\sum_{(T_i, T_j) \in Prec, h\le k+FD(T_i)} M(\hat{p}_{i}^k)M(\hat{p}_{j}^{h})) \label{eqn:jobshopprecedence}\\
H_{conflict}(\mathbf{M}) &= \sum_{(T_i, T_j, k, h) \in C^{timed}} M(\hat{p}_{i}^k)M(\hat{p}_{j}^h), \label{eqn:jobshopconflict}\\
H_{firings}(\mathbf{M}) &= \sum_{i=1}^{|T|} (\sum_{k=0}^{MaxTime}M(\hat{p}_{i}^k) - 1)^2. \label{eqn:jobshopfinringcount}
 \end{align}

The precedence set $Prec$ in (\ref{eqn:jobshopprecedence}) is easily constructed by extracting the precedence relation in each sequential system (job).
We can extract the timed conflict set $C^{timed}$ in (\ref{eqn:jobshopconflict}) from the connection between the transitions and resource places and the firing duration $FD$.
\textit{MaxTime} is the only parameter we need to set before the optimization process. 
This value indicates the delivery time deadline.
The total binary quadratic net is as follows:
\begin{align}
H_{total}(\mathbf{M}) &= A\cdot H_{precedence}(\mathbf{M}) + B\cdot H_{conflict}(\mathbf{M}) + C \cdot H_{firings}(\mathbf{M})
\end{align}
Note that this formulation is used to obtain a feasible solution; however, we can extend this to the optimization by incorporating it with a binary search to find the minimum \textit{MaxTime}.
This formulation is equivalent to the QUBO model in \citep{venturelli2016quantum}.
Figure \ref{fig:jobshopResult} shows the schedule obtained through the annealing process using a Fujitsu Digital Annealer.

\begin{figure}[tbhp]
 \begin{center}
 \includegraphics[scale=0.4]{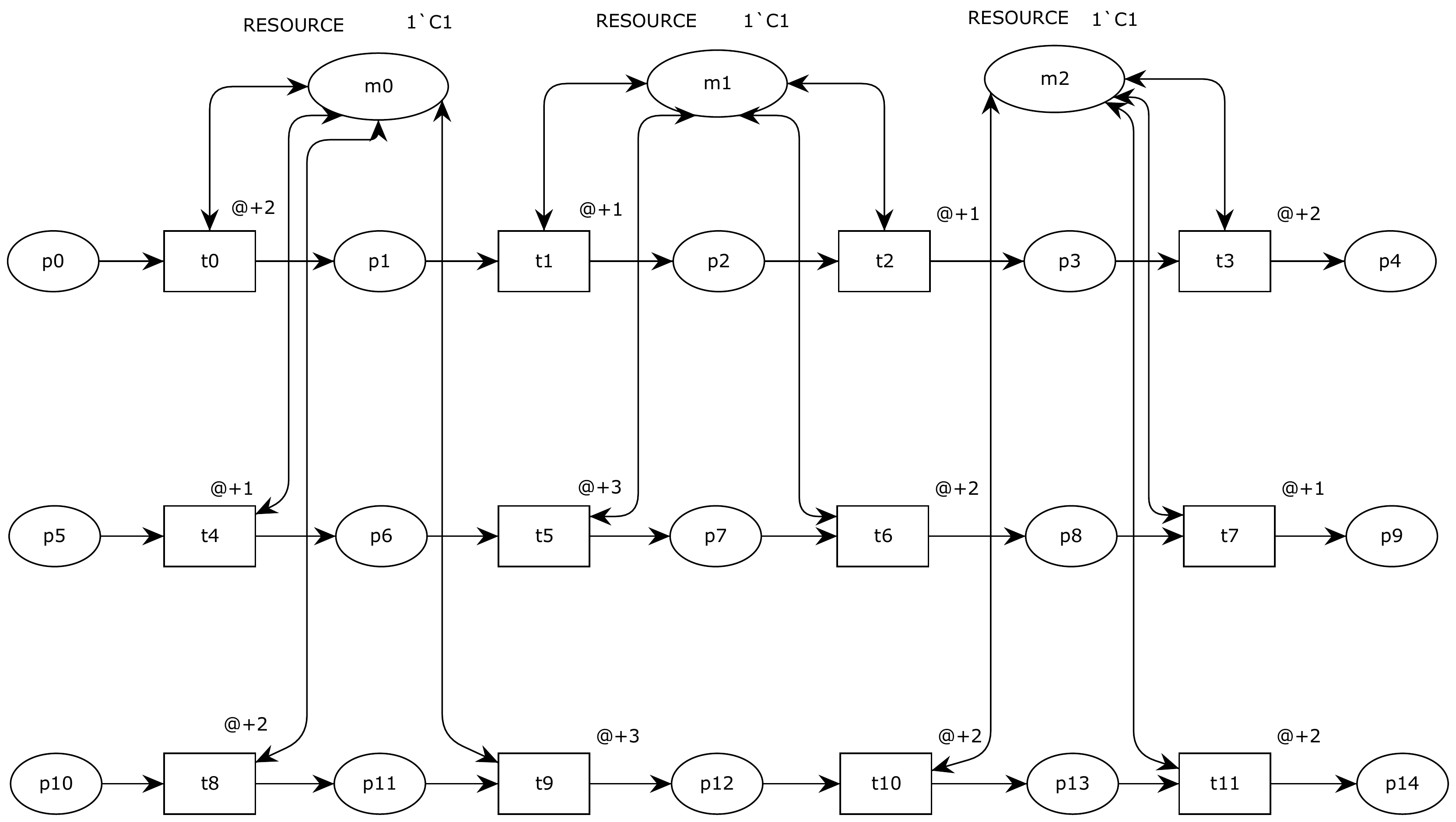}
 \caption{Colored Timed Petri Net Model for Job Shop Scheduling drawn by CPNTools \citep{cpn}}
 \label{fig:jobshop}
 \end{center}
\end{figure}

\begin{figure}[tbhp]
 \begin{center}
 \includegraphics[scale=0.7]{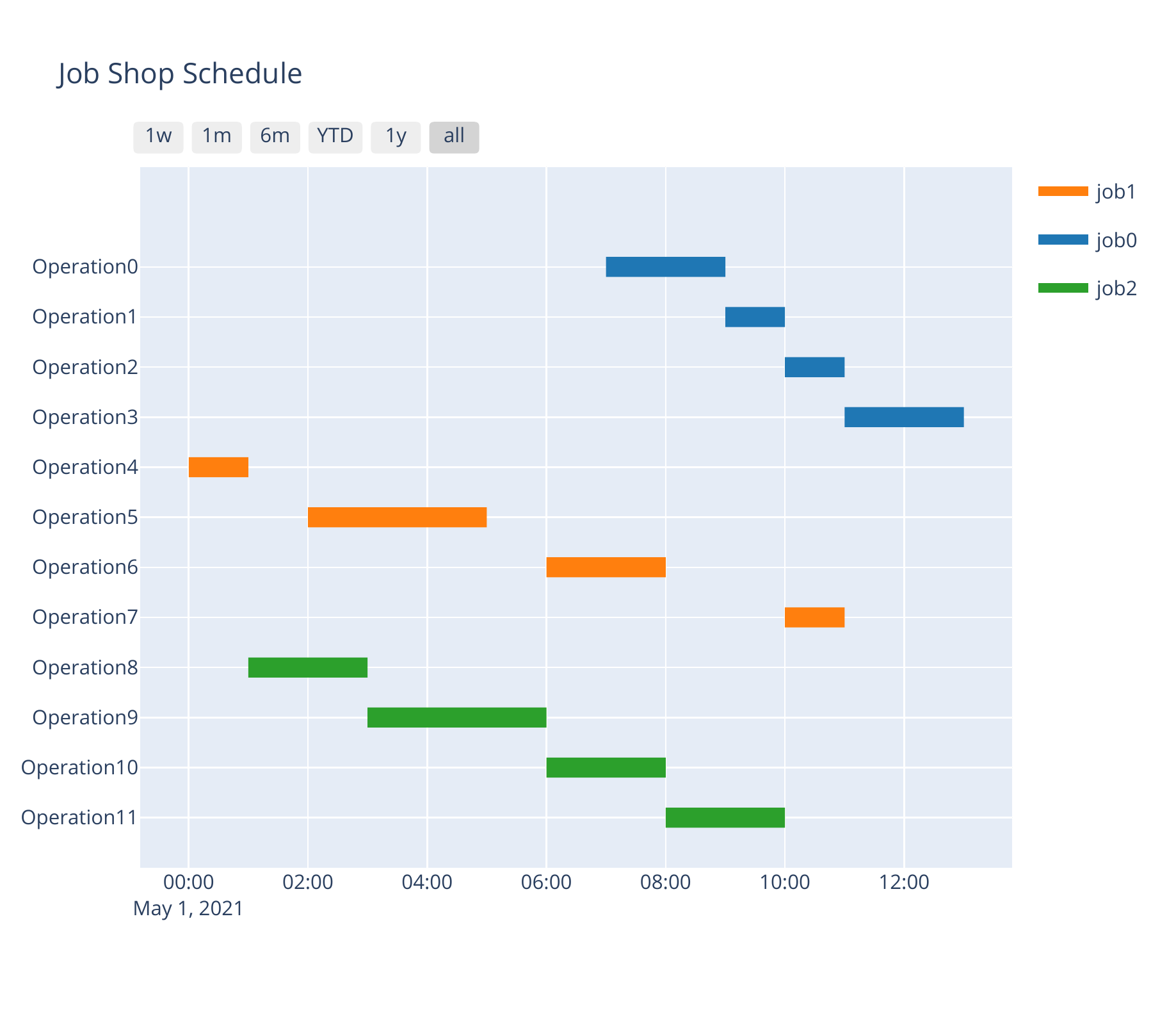}
 \caption{Job Shop Schedule obtained with Fujitsu Digital Annealer}
 \label{fig:jobshopResult}
 \end{center}
\end{figure}

\section{Conclusions}

This paper proposes a Petri net modeling approach to the Ising model formulation for quantum annealing.
Although our method requires users to model their optimization problems with Petri nets, this process can be carried out in a relatively straightforward manner if we know the target problem and the simple Petri net modeling rules.
Therefore, we can drastically relax the difficulty of the Ising model formulation.
We implemented our method with Python incorporated using well-known Petri net tools, CPNTools and SNAKES.
We can automatically generate the Ising models for optimization problems, such as scheduling problems, vehicle routing problems, portfolio optimization problems, and others once we model the target optimization problems with Petri nets.

We defined binary quadratic nets to represent the Ising model formulation.
However, the binary quadratic net can also be used to analyze the quantum annealing process by attaching an additional subnet that simulates the annealing.
This tool may contribute to parameter tuning for annealing, which is another task used to expand the quantum annealing technology mentioned in the Introduction.

\appendix
\renewcommand{\thetable}{\Alph{section}.\arabic{table}}
\renewcommand{\thefigure}{\Alph{section}.\arabic{figure}}

\section{Apendix}
Table \ref{atbl:quboip} shows the interaction primitives $I^{qubo}_i, i=0, 1,..., 15$ for two binary variables in $\{0, 1\}$, 
where $(0, 0), (0, 1), (1, 0)$, and $(1, 1)$ express all possible combinations of the two binary variables.
In addition, $I^{qubo}_1, I^{qubo}_7,  I^{qubo}_8,  I^{qubo}_9$ correspond to AND, XOR, OR, and NOR, respectively.
Moreover, $I^{qubo}_{0}, I^{qubo}_{15}$ are the inconsistency and tautology, respectively.
The others also show important logical functions.

Table \ref{atbl:isingip} shows the interaction primitives for the Ising model converted from Table \ref{atbl:quboip} by using the following relation.
\begin{align}
M(\hat{p}_i)^{ising} &= 2 M(\hat{p}_i)^{qubo} -1  \label{eqn:ising_qubo} \\ 
M(\hat{p}_i)^{qubo} &= (M(\hat{p}_i)^{ising} + 1)/2 \label{eqn:qubo_ising} 
\end{align}

\begin{table}[h]
\begin{centering}
\caption{Interaction Primitives in QUBO Models}
\label{atbl:quboip}
\begin{tabular}{c||c|c|c|c||l}

\hline\hline
  & (0, 0) & (0, 1) & (1, 0) & (1, 1) & \multicolumn{1}{c}{Energy function }            \\ \hline\hline
$I^{qubo}_0$                & 0      & 0      & 0      & 0      & $0$                       \\ \hline
$I^{qubo}_1$                & 0      & 0      & 0      & 1      & $M(\hat{p}_i)M(\hat{p}_j)$                       \\ \hline
$I^{qubo}_2$                   & 0      & 0      & 1      & 0      & $M(\hat{p}_i)(1-M(\hat{p}_j))$                       \\ \hline
$I^{qubo}_3$                   & 0      & 0      & 1      & 1      & $M(\hat{p}_i)$                       \\ \hline
$I^{qubo}_4$                   & 0      & 1      & 0      & 0      & $(1-M(\hat{p}_i))M(\hat{p}_j)$                       \\ \hline
$I^{qubo}_5$                   & 0      & 1      & 0      & 1      & $M(\hat{p}_j)$                       \\ \hline
$I^{qubo}_6$                & 0      & 1      & 1      & 0      & $M(\hat{p}_i) + M(\hat{p}_j) - 2M(\hat{p}_i)M(\hat{p}_j)$ \\ \hline
$I^{qubo}_7$                 & 0      & 1      & 1      & 1      & $M(\hat{p}_i) + M(\hat{p}_j) - M(\hat{p}_i)M(\hat{p}_j)$                 \\ \hline
$I^{qubo}_8$                & 1      & 0      & 0      & 0      & $1-M(\hat{p}_i) - M(\hat{p}_j) + M(\hat{p}_i)M(\hat{p}_j)$        \\ \hline
$I^{qubo}_9$               & 1      & 0      & 0      & 1      & $1-M(\hat{p}_i)-M(\hat{p}_j) + 2M(\hat{p}_i)M(\hat{p}_j)$ \\ \hline
$I^{qubo}_{10}$               & 1      & 0      & 1      & 0      & $1-M(\hat{p}_j)$ \\ \hline
$I^{qubo}_{11}$               & 1      & 0      & 1      & 1      & $1 - M(\hat{p}_j) + M(\hat{p}_i)M(\hat{p}_j)$ \\ \hline
$I^{qubo}_{12}$              & 1      & 1      & 0      & 0      & $1 - M(\hat{p}_i)$ \\ \hline
$I^{qubo}_{13}$               & 1      & 1      & 0      & 1      & $1 - M(\hat{p}_i) + M(\hat{p}_i)M(\hat{p}_j)$ \\ \hline
$I^{qubo}_{14}$               & 1      & 1      & 1      & 0      & $1 - M(\hat{p}_i)M(\hat{p}_j)$                         \\ \hline
$I^{qubo}_{15}$               & 1      & 1      & 1      & 1      & $1$                         \\ 
\end{tabular}
\\ $x$ and $y$ in $(x, y), x, y \in \{0, 1\}$, show $M(\hat{p}_i)$ and $M(\hat{p}_j)$, respectively.\\
A value of 1 and 0 in each cell represents a preferable and an un-preferable interaction, respectively.
\end{centering}
\end{table}

\begin{table}[h]
\begin{centering}
\caption{Interaction Primitives in Ising Models}
\label{atbl:isingip}
{\small
\begin{tabular}{c||c|c|c|c||l}
  & (-1, -1) & (-1, +1) & (+1, -1) & (+1, +1) & \multicolumn{1}{c}{Energy function }            \\ \hline\hline
$I^{Ising}_0$                & 0      & 0      & 0      & 0      & $0$                       \\ \hline
$I^{Ising}_1$                & 0      & 0      & 0      & 1      & $\frac{1}{4}(M(\hat{p}_i)+1)(M(\hat{p}_j)+1)$                       \\ \hline
$I^{Ising}_2$                   & 0      & 0      & 1      & 0      & $\frac{1}{4}(M(\hat{p}_i)+1)(-M(\hat{p}_j)+1)$                       \\ \hline
$I^{Ising}_3$                   & 0      & 0      & 1      & 1      & $\frac{1}{2}(M(\hat{p}_i)+1)$                       \\ \hline
$I^{Ising}_4$                   & 0      & 1      & 0      & 0      & $\frac{1}{4}(-M(\hat{p}_i)+1)(M(\hat{p}_j)+1)$                       \\ \hline
$I^{Ising}_5$                  & 0      & 1      & 0      & 1      & $\frac{1}{2}(M(\hat{p}_j)+1)$                       \\ \hline
$I^{Ising}_6$                & 0      & 1      & 1      & 0      & $\frac{1}{2}(1- M(\hat{p}_i)M(\hat{p}_j))$ \\ \hline
$I^{Ising}_7$                 & 0      & 1      & 1      & 1      &  $\frac{1}{4}(M(\hat{p}_i)+M(\hat{p}_j)-M(\hat{p}_i)M(\hat{p}_j)+3)$                \\ \hline
$I^{Ising}_8$                & 1      & 0      & 0      & 0      & $\frac{1}{4}(-M(\hat{p}_i)-M(\hat{p}_j)+M(\hat{p}_i)M(\hat{p}_j)+1)$        \\ \hline
$I^{Ising}_9$               & 1      & 0      & 0      & 1      & $\frac{1}{2}(M(\hat{p}_i)M(\hat{p}_j)+1)$  \\ \hline
$I^{Ising}_{10}$               & 1      & 0      & 1      & 0      & $\frac{1}{2}(1-M(\hat{p}_j))$ \\ \hline
$I^{Ising}_{11}$               & 1      & 0      & 1      & 1      & $\frac{1}{4}(M(\hat{p}_i)-M(\hat{p}_j)+M(\hat{p}_i)M(\hat{p}_j)+3)$ \\ \hline
$I^{Ising}_{12}$               & 1      & 1      & 0      & 0      & $\frac{1}{2}(1-M(\hat{p}_i))$ \\ \hline
$I^{Ising}_{13}$               & 1      & 1      & 0      & 1      & $\frac{1}{4}(-M(\hat{p}_i)+M(\hat{p}_j)+M(\hat{p}_i)M(\hat{p}_j)+3)$ \\ \hline
$I^{Ising}_{14}$               & 1      & 1      & 1      & 0      & $\frac{1}{4}(-M(\hat{p}_i)-M(\hat{p}_j)-M(\hat{p}_i)M(\hat{p}_j)+3)$                         \\ \hline
$I^{Ising}_{15}$               & 1      & 1      & 1      & 1      & $1$    \\                    
\end{tabular}
}
\\ $x$ and $y$ in $(x, y), x, y \in \{-1, +1\}$, show $M(\hat{p}_i)$ and $M(\hat{p}_j)$, respectively.\\
A value of 1 and 0 in each cell represents a preferable and an un-preferable interaction, respectively.
\end{centering}
\end{table}

\bibliographystyle{unsrtnat}
\bibliography{pn2qubo_bib}


%


\end{document}